\newtheorem{theorem}{Theorem}
\newtheorem{lemma}{Lemma}
\newtheorem{assumption}{Assumption}
\newtheorem{corollary}{Corollary}
\def\BibTeX{{\rm B\kern-.05em{\sc i\kern-.025em b}\kern-.08em
    T\kern-.1667em\lower.7ex\hbox{E}\kern-.125emX}}
\begin{document}

\newgeometry{top=1in, bottom=0.75in, left=0.75in, right=0.75in}

\title{Resource Allocation with Multi-Team Collaboration Based on Hamilton's Rule\\
    \thanks{
        $^\dag$R. Karam, R. Lin, B. Butler, and M. Egerstedt are with the Samueli School of Engineering, University of California, Irvine, Irvine, CA, 92697, USA. Email: {\tt\small \{rwkaram, rlin10, bbutler2, magnus\}@uci.edu}. This research was supported in part by an appointment to the Intelligence Community Postdoctoral Research Fellowship Program at the University of California, Irvine administered by Oak Ridge Institute for Science and Education (ORISE) through an interagency agreement between the U.S. Department of Energy and the Office of the Director of National Intelligence (ODNI). This work is also sponsored in part by the Air Force Research Laboratory, Munitions Directorate (RWTA), Eglin AFB, FL (Award No. FA8651-24-2-0001). Opinions, findings and conclusions, or recommendations are those of the authors and do not necessarily reflect the views of the sponsoring agencies.
    }
}

\author{\IEEEauthorblockN{Riwa Karam, Ruoyu Lin, Brooks A. Butler, and Magnus Egerstedt}}

\maketitle

\begin{abstract}
    This paper presents a multi-team collaboration strategy based on Hamilton's rule from ecology that facilitates resource allocation among multiple teams, where agents are considered as shared resource among all teams that must be allocated appropriately. We construct an algorithmic framework that allows teams to make bids for agents that consider the costs and benefits of transferring agents while also considering relative mission importance for each team. This framework is applied to a multi-team coverage control mission to demonstrate its effectiveness. It is shown that the necessary criteria of a mission evaluation function are met by framing it as a function of the locational coverage cost of each team with respect to agent gain and loss, and these results are illustrated through simulations. 
\end{abstract}

\section{Introduction}

Multi-robot systems are increasingly deployed in large-scale and safety-critical applications such as search-and-rescue, environmental monitoring, and disaster response, e.g., \cite{lim2009market,dasgupta2011effects,kim2024area,butler2024collaborative,butler2024resilience,lin2023dynamic}. Some of these tasks often require teams of robots to operate in parallel across geographically distinct regions, while balancing between mission priorities for individual teams and overall system performance.
%A key challenge in such scenarios is how to allocate robots across multiple teams in a way that balances the priorities of individual missions with the overall effectiveness of the entire system.}
A key challenge in such scenarios is the optimal allocation of available robots across multiple teams, weighed by the priorities of the individual team missions, driving the entire system to an optimized overall effectiveness.

This challenge connects directly to the broader literature on resource allocation in multi-agent systems, e.g., \cite{anussornnitisarn2005decentralized,jiang2022collaborative, banerjee2016multi}, with applications including cyber-physical systems \cite{afrin2021resource}, UAV team deployment \cite{dai2022multi}, and smart grid scheduling \cite{nair2018multi}. For instance, \cite{dai2022multi} combines reinforcement learning with communication resource allocation, for UAV deployment, optimizing coverage and communication. In addition to computation resource distribution, other works consider collaboration in manufacturing \cite{li2018multi} and the collaborative management of supply chains \cite{fu2015adaptive}. Game-theoretic frameworks have also been proposed to handle dynamically changing environments, where task demands evolve over time and robot allocations must adapt accordingly, e.g., \cite{park2021multi}. In practice, however, assigning an additional robot to a team can directly increase that team’s mission effectiveness.

A common inspiration for modeling systems is the environment and nature surrounding us; understanding how it moves and works has been an influence to previous literature work, as in \cite{bonabeau1999swarm, benyus1997biomimicry, bar2005biomimetics, vincent2006biomimetics}. In this paper, we are inspired by and extract the properties of Hamilton’s rule \cite{hamilton1964genetical}. This rule, originally formulated in evolutionary biology, states that altruistic behavior emerges when the weighted benefit to others exceeds the cost to the donor. By analogy, we consider how teams of robots may relate to each other in terms of relative mission importance, and we view the robots themselves as transferable resources that may be shared among cooperating teams.

Building on this perspective, we design a framework in which teams compare both the relative importance of their missions and the marginal cost or benefit of reallocating robots. Transfers are permitted only when they improve the performance of the overall multi-team system. We demonstrate this idea in the context of a Voronoi-based coverage control problem, e.g., \cite{cortes2004coverage}, showing how altruistic decisions lead to improved system outcomes. Unlike prior work \cite{lin2023predator} that focused on a single team of cooperating robots, our setting explicitly addresses multiple teams of robots, each initially assigned to a distinct region but capable of reallocating robots for a collective benefit.

The remainder of this paper is outlined as follows. In Section~\ref{sec:ham_rule_framework}, we present a collaborative framework inspired by Hamilton's rule that facilitates the transfer of agents between teams depending on the relative costs of agent transfer and mission importance for each team. We then show how this framework can be applied in a coverage control setting in Section~\ref{sec:coverage}, demonstrate its performance through simulation in Section~\ref{sec:simulations}, and provide concluding remarks in Section~\ref{sec:conclusion}.

\section{A Team-Level Collaboration Framework Based on Hamilton's Rule} \label{sec:ham_rule_framework}

\parfillskip=0pt
In this section, we formulate the problem of resource allocation in multi-agent teams based on Hamilton's rule. To begin with, we introduce a graph-based model, e.g., \cite{mesbahi2010graph}, to represent team interactions. Let \( G = (V, E) \) be the undirected graph abstracting the \( m \) different available teams of agents as well as their potential interactions, where \( V = \{v_1, v_2, \dots, v_m\} \) is the set of nodes representing the teams, and \( E \subseteq V \times V \) is the set of edges representing the 
\restoregeometry \noindent \setlength{\parfillskip}{0pt plus 1fil}
potential collaborative interactions between them.
\begin{assumption}
    \label{ass:homogeneous}
    All agents in all \( m \) teams are homogeneous; i.e., all agents are identical.
\end{assumption}
In this paper, we compute potential \textbf{collaborations} between different teams of \textbf{homogeneous} agents (given Assumption~\ref{ass:homogeneous}) using Hamilton's rule. Each team $k$ is assigned a mission evaluation function $F_k(n_k)$, denoting the team's performance at achieving its assigned mission, where $n_k$ denotes the number of agents allocated to team $k$. The mission evaluation function is assumed to be strictly increasing, to emphasize that having a higher value in $F_k(n_k)$ is equivalent to team $k$ performing better at its assigned mission, and having diminishing returns in the improvement of team $k$'s performance -hence, highlighting the fact that after a certain point, allocating additional resources to team $k$ produces progressively smaller gains in performance. This property incurs a more balanced distribution of resources among teams. Those assumptions are given by the following:
\begin{assumption}
    \label{ass:mission_eval_func}
    For each team $k$, the mission evaluation function \( F_k(n_k) : \mathbb{N} \to \mathbb{R} \)
    satisfies the following properties:
    \begin{enumerate}
        \item \textbf{Strictly Increasing:} For all $n \geq 1$,
        \begin{equation} \nonumber
            F_k(n+1) > F_k(n).
        \end{equation}
        \item 
        \textbf{Discrete Concavity:} For all $n \geq 2$,
        \begin{equation} \nonumber
            F_k(n + 1) - F_k(n) \leq F_k(n) - F_k(n - 1).
        \end{equation}
    \end{enumerate}
\end{assumption}
It is important to note that an increasing number of agents in a team will increase that team's performance with diminishing returns; however, these returns will eventually saturate and start exhibiting decreasing returns \cite{ferber1999multi}. In this paper, we consider areas of interest where an overcrowding of agents is not possible, and where the number of agents in each team at any point in time will not reach a point where decreasing returns occur.

In ecology, Hamilton’s rule was originally formulated to explain the evolution of altruistic behavior among genetically related individuals \cite{hamilton1964genetical}. According to Hamilton’s rule, an altruistic act is favored by natural selection if the \textit{benefit to the recipient}, weighted by the \textit{genetic relatedness} between the donor and recipient, exceeds the \textit{cost to the donor}. This is expressed as \( rB \geq C \), where \( r \) is the coefficient of relatedness, \( B \) is the benefit to the recipient, and \( C \) is the cost to the donor. In our multi-team scenario, Hamilton's rule is used to decide whether the exchange of an agent between pairwise teams is beneficial. Here, the role of genetic relatedness is replaced by a weight ratio that reflects the relative importance or ``mission priority" of the teams, such that
\begin{equation} \nonumber
    r_{ij} = \frac{w_j}{w_i} \quad \text{and} \quad r_{ji} = \frac{w_i}{w_j} = \frac{1}{r_{ij}},
\end{equation}
where $w_k > 0$ is a positive weight representing the mission's importance. For this work, we adapt Hamilton's rule to this problem setting as:
\begin{equation} \label{eq:HR}
    r_{ij}B_j > C_i,
\end{equation}
where $B_j$ represents the benefit for team $j$ upon receiving an additional agent, and $C_i$ represents the cost for team $i$ of losing that agent. Note that, in our case, we consider when collaboration is strictly beneficial and exclude scenarios where it is equally beneficial.

\subsection{Pairwise Uni-Directionality of Hamilton's Rule} \label{sub:uni_dir}

Suppose we have two teams of homogeneous agents, each executing a mission in separate regions of interest. For our problem formulation, we define, for a team \( k \in V \),
\begin{align}
    B_k = F_k(n_k + \delta) \, - \, F_k(n_k), \label{eq:B_k} \\
    C_k = F_k(n_k) \, - \, F_k(n_k - \delta), \label{eq:C_k}
\end{align}
where \( \delta \in \mathbb{N} \) is the number of agents to be exchanged between two neighboring teams. In our work, we consider the exchange of a single robot per each update, i.e., \( \delta = 1\).
Thus, the first condition from \eqref{eq:HR} that must be satisfied for team \( i \) to give an agent to team \( j \), given that \( (i, j)\in E \), becomes
\begin{equation} \label{eq:HR_new}
    r_{ij} \Bigl[F_j(n_j + 1) - F_j(n_j)\Bigr] \, > \, \Bigl[F_i(n_i) - F_i(n_i - 1)\Bigr],
\end{equation}
where $F_k(n_k + 1)$ denotes the value of team $k$'s evaluation function after receiving an agent, and $F_k(n_k - 1)$ denotes it after losing an agent. Since $F_k$ is strictly increasing, we have:
\begin{equation} \nonumber
    F_k(n_k + 1) \, > \, F_k(n_k) \, > \, F_k(n_k - 1).
\end{equation}
This condition ensures that the weighted marginal gain for team $j$ exceeds the marginal loss for team $i$; diminishing returns imply that the benefit of an additional agent is greater, as well as the cost of losing an agent is higher, when a team has fewer agents.

For the sake of our work, a collaboration between two teams of homogeneous agents is beneficial when Hamilton's rule, as defined in equation~\eqref{eq:HR_new}, is satisfied, and is the migration of one agent from one team to the other. However, when considering pairwise teams, it would be redundant to have Hamilton's rule be satisfied in both directions, i.e., it being beneficial for team \( i \) to give team \( j \) an agent and for team \( j \) to give team \( i \) an agent. Hence, proving the \textbf{uni-directionality} property of Hamilton's rule is essential.
\begin{theorem}\label{thm:uni_dir}
Consider two teams $i$ and $j$, with \( i,j \in V \), each with at least one agent.
Under Assumptions~\ref{ass:homogeneous} and~\ref{ass:mission_eval_func}, if \( r_{ij} \, B_j > C_i \), it follows that \( r_{ji} \, B_i \leq C_j \).
\end{theorem}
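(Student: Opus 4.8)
The plan is to reduce the whole statement to a single chain of inequalities among the weighted marginal gains and losses. The key observation, and the heart of the argument, is that the discrete concavity in Assumption~\ref{ass:mission_eval_func} forces the benefit of gaining an agent to never exceed the cost of losing one: instantiating the concavity inequality at $n = n_k$ gives
\[
    B_k = F_k(n_k+1) - F_k(n_k) \;\le\; F_k(n_k) - F_k(n_k-1) = C_k
\]
for each team $k$. This is exactly where the diminishing-returns hypothesis does its work, and it holds simultaneously for the donor $i$ and the recipient $j$.

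Next I would clear the denominator in the hypothesis. Since $r_{ij} = w_j/w_i$ with $w_i, w_j > 0$, the assumed condition $r_{ij} B_j > C_i$ is equivalent to $w_j B_j > w_i C_i$. I would then splice in the two concavity consequences $B_i \le C_i$ and $B_j \le C_j$, each multiplied by its corresponding positive weight (which preserves the inequality direction), to assemble
\[
    w_i B_i \;\le\; w_i C_i \;<\; w_j B_j \;\le\; w_j C_j,
\]
so that $w_i B_i < w_j C_j$. Dividing through by $w_j > 0$ yields $r_{ji} B_i = (w_i/w_j)\,B_i < C_j$, which in particular gives the claimed $r_{ji} B_i \le C_j$ (in fact strictly).

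Because the core algebra is short, the point requiring genuine care is the boundary behavior rather than the chain itself. The discrete concavity is stated only for $n \ge 2$, so the step $B_k \le C_k$ is immediate for an interior index $n_k \ge 2$; for the extremal case $n_k = 1$ I would note that the hypothesis ``each with at least one agent'' keeps $C_k = F_k(n_k) - F_k(n_k-1)$ well defined and would treat $n_k = 1$ either as an admissible boundary index of the concavity assumption or as an explicit separate case. I would also flag that the conclusion is deliberately written as the non-strict $r_{ji} B_i \le C_j$ precisely because this is the logical negation of the strict Hamilton's rule condition $r_{ji} B_i > C_j$; the argument actually establishes the strict inequality, confirming that the reverse transfer can never be simultaneously beneficial and thereby securing the uni-directionality property.
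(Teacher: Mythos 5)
Your proposal is correct, but it takes a genuinely different route from the paper. The paper argues by contradiction: it assumes both $r_{ij}B_j > C_i$ and $r_{ji}B_i > C_j$, rewrites them as $w_j B_j > w_i C_i$ and $w_i B_i > w_j C_j$, combines them to deduce $B_i B_j > C_i C_j$, and then contradicts this with the product bound $B_i B_j \le C_i C_j$ obtained by writing $C_i = B_i + \alpha$, $C_j = B_j + \beta$ with $\alpha,\beta \ge 0$ (a step that implicitly relies on $B_i, B_j \ge 0$, guaranteed by monotonicity of $F_k$). You instead give a direct proof via the single chain $w_i B_i \le w_i C_i < w_j B_j \le w_j C_j$, which rests on the same key consequence of discrete concavity, $B_k \le C_k$, but never multiplies two inequalities together. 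Your version is shorter, sidesteps the non-negativity issue entirely, and delivers the strict conclusion $r_{ji}B_i < C_j$ rather than merely the stated $\le$. You are also more careful than the paper on the boundary case $n_k = 1$: the concavity hypothesis as written only covers $n \ge 2$, so the step $B_k \le C_k$ at $n_k = 1$ requires either extending the assumption to that index or a separate argument, a gap the paper's proof silently shares. Both approaches are sound modulo that boundary caveat; yours is the cleaner of the two.
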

\begin{proof}
    We prove the result by contradiction. Assume that both $r_{ij} \, B_j > C_i$ and $r_{ji} \, B_i > C_j$ hold simultaneously. Writing these two inequalities in terms of weights, we get $w_j \, B_j > w_i \, C_i $ and $ w_i \, B_i > w_j \, C_j $, leading to 
    \begin{equation*}
    w_i\,B_i > \left(w_i\,\frac{C_i}{B_j}\right) C_j.
    \end{equation*}
    Canceling $w_i$ (with $ w_i > 0 $) yields
    \begin{equation} \label{eq:wrong_ineq}
        B_i \, B_j > C_i \, C_j.
    \end{equation}
    However, $F_k$ having diminishing returns as a property, given in Assumption~\ref{ass:mission_eval_func}, implies that for each team $k$,
    \begin{equation} \nonumber
        B_k = F_k(n_k + 1) - F_k(n_k) \leq F_k(n_k) - F_k(n_k - 1) = C_k,
    \end{equation}
    by~\eqref{eq:B_k} and~\eqref{eq:C_k}. Thus, we have $B_i \leq C_i$ and $B_j \leq C_j$, which can be rewritten as $C_i = B_i + \alpha$ and $C_j = B_j + \beta$, such that $\alpha, \beta \geq 0$. Then, 
    \begin{equation*}
    C_i \, C_j = (B_i + \alpha)(B_j + \beta) = B_i \, B_j + \beta B_i + \alpha B_j + \alpha\beta.
    \end{equation*}
    Since $\beta B_i$, $\alpha B_j$, and $\alpha\beta$ are all non-negative, we have 
    \begin{equation*}
    B_i \, B_j \leq C_i \, C_j,
    \end{equation*}
    which contradicts the inequality~\eqref{eq:wrong_ineq}.
    Thus, since all agents are identical, i.e., have equal ``resource contribution", under Assumption~\ref{ass:homogeneous}, the assumption that both inequalities hold must be false, proving Theorem~\ref{thm:uni_dir}.
\end{proof}

\subsection{Multi-Team Bidding Process} \label{sub:bid_proc}

\begin{figure*}[!t]
    \centering
    \includegraphics[width=2\columnwidth]{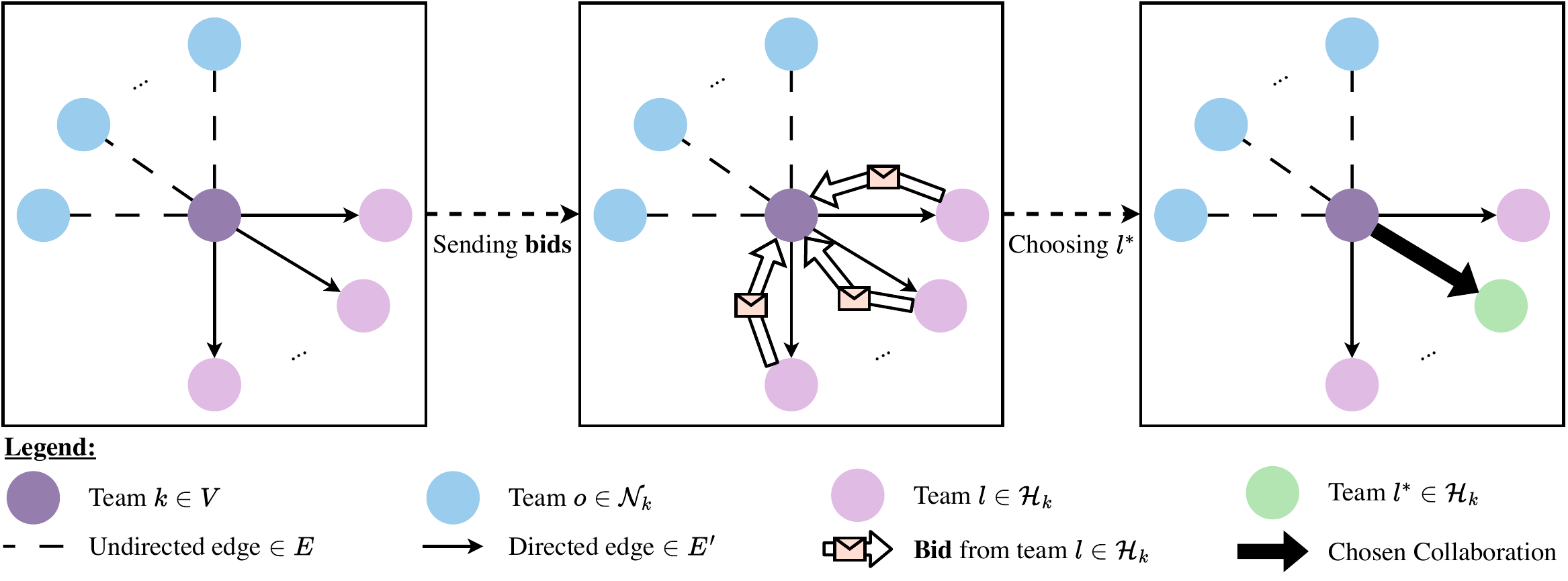}
    \caption{Bidding process for team \( k \in V \) to choose team \( l^* \in \mathcal{H}_k \) between all \( l \in \mathcal{H}_k \). This bidding process happens according to Equations~\eqref{eq:bid_out} and~\eqref{eq:opt_bid_out}, and makes sure that each team having multiple outgoing collaborations, will choose the collaboration with the maximum net gain benefit.}
    \label{fig:bid_collab_choice}
    \vspace{-12pt}
\end{figure*}

In Section~\ref{sub:uni_dir}, we focused on pairwise collaborations using Hamilton's Rule. However, when considering a system with multiple teams, the problem is inherently more challenging. Instead of only evaluating local, pairwise interactions, the overall system performance must be taken into account. We say that given two teams \( i,j \in E\), an outgoing collaboration from team \( i \) to team \( j \) and an incoming collaboration to team \( i \) from team \( j \) is beneficial when Equation~\ref{eq:HR} holds. We must note that if a team ends up with \( 0 \) robots after collaborating, in the next collaboration process, that team will only be considered for incoming collaborations.

Having proved that pairwise collaborations are unidirectional in Section~\ref{sub:uni_dir}, we must now consider how each team will handle its potential collaborations with its neighbors. Let \( \mathcal{N}_k \subseteq E \) be the set of team $k$'s neighbors. Let \( \mathcal{H}_k \subseteq \mathcal{N}_k \) be the set of team $k$'s neighbors with which team $k$ has an outgoing collaboration. Team $k$ will choose to only collaborate with one of its neighbors depending on the following \textbf{bid}: each neighbor $l \in \mathcal{H}_k$ will send its expected performance improvement (i.e., net gain) to team $k$ such that
\begin{equation}
    \begin{aligned}
        \Delta_{k \to l} &= r_{kl}B_l - C_j \\
                 &= w_l[F_l(n_l + 1) - F_l] \\
                 &\quad- w_k[F_k(n_k) - F_k(n_k - 1)].
    \end{aligned}
    \label{eq:bid_out}
\end{equation}
Team $k$ will then collaborate with $l^\mathbb{*}$ such that
\begin{equation}
    l^\mathbb{*} = \arg \max_{l\in \mathcal{H}_k} \Delta_{k \to l}. \label{eq:opt_bid_out}
\end{equation}
This bidding process is shown in Figure~\ref{fig:bid_collab_choice}.
Let \( \mathcal{I}_k \subseteq \mathcal{N}_k \) be the set of team $k$'s neighbors with which team $k$ has an incoming collaboration with. Then, in a similar manner to the previous bid for outgoing collaborations, team $k$ will choose to only receive an agent from one of its neighbors depending on the following \textbf{bid}: each neighbor $e \in \mathcal{I}_k$ will send team $k$'s expected performance improvement to team $k$, such that
\begin{equation}
    \begin{aligned}
        \Delta_{e \to k} &= r_{ek}B_k - C_e \\
                 &= w_k[F_k(n_k + 1) - F_k] \\
                 &\quad - w_e[F_e(n_e) - F_e(n_e - 1)].
    \end{aligned}
    \label{eq:bid_in}
\end{equation}
Team $k$ will then collaborate with $e^\mathbb{*}$ such that
\begin{equation}
    e^\mathbb{*} = \arg \max_{e\in \mathcal{I}_k} \Delta_{e \to k}. \label{eq:opt_bid_in}
\end{equation}

We introduce a \textit{global mission evaluation function} that aggregates the performance of all teams. For \( m \) teams, the global objective is defined as:
\begin{equation} \label{eq:global_func}
    \mathcal{G}(n_1, n_2, \dots, n_m) = \sum_{k=1}^{m} w_k \, F_k(n_k).
\end{equation}
This global function encapsulates the collective performance of the entire system, and is used to determine whether a proposed collaboration, or set of collaborations, improves the overall system effectiveness. Equations~\eqref{eq:bid_out},~\eqref{eq:opt_bid_out},~\eqref{eq:bid_in}, and~\eqref{eq:opt_bid_in} guarantee a maximum local increase in \( \mathcal{G} \). In a multi-team setting, decisions must consider not only individual pairwise benefits, but also the net effect on the global objective \( \mathcal{G} \). As such, the bidding and collaboration protocols must be extended to evaluate and coordinate transfers among more than two teams simultaneously. In the following, we discuss a potential approach for implementing these multi-team collaboration strategies.

\begin{figure*}[!t]
    \centering
    \includegraphics[width=2\columnwidth]{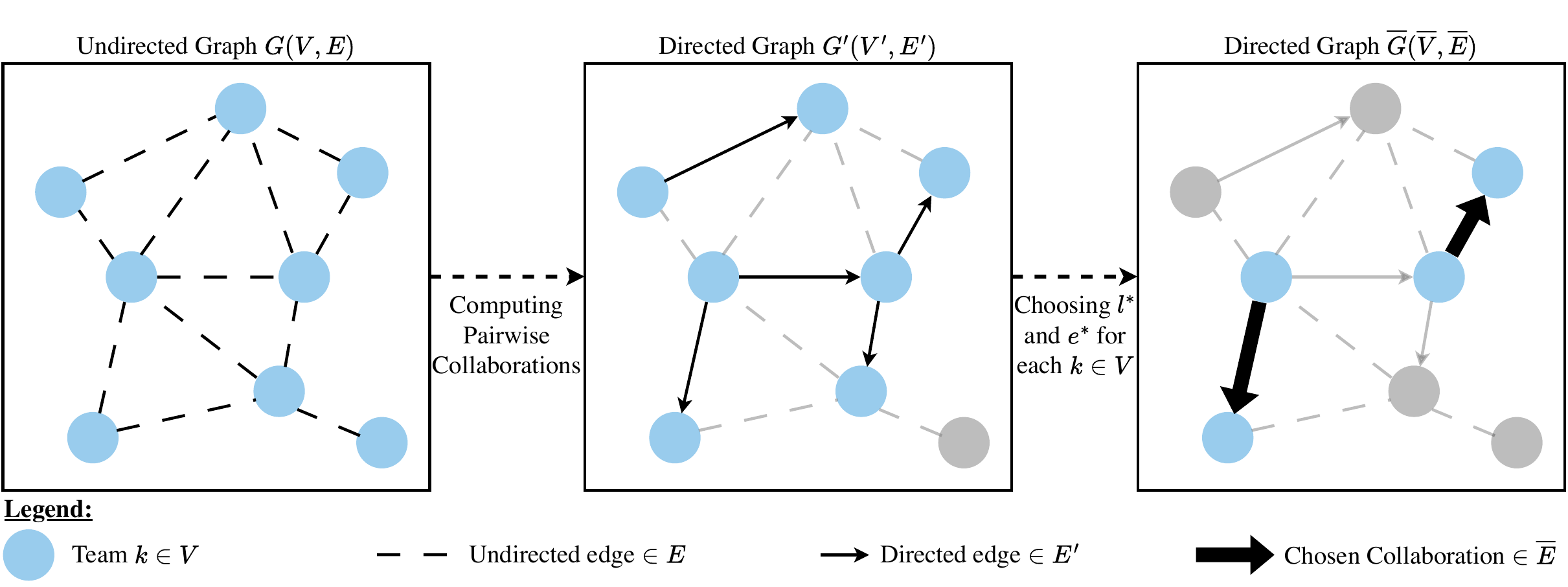}
    \caption{Graphs filtering process: from an original undirected input graph, to its filtered directed graph of collaborations satisfying Equation~\eqref{eq:HR}, and then to its filtered directed graph of chosen pairwise collaborations using the bidding process introduced in Section~\ref{sub:bid_proc}.}
    \label{fig:graph_filtering}
    \vspace{-12pt}
\end{figure*}

Given the initial \emph{undirected} graph $G(V, E)$, computing the allowed collaborations and defining their directions using Hamilton's rule as defined in Equation~\eqref{eq:HR_new}, leads to a \emph{directed} graph of filtered edges and nodes. Let this directed graph be defined as \( G'(V', E')\) such that \( V' \subseteq V \) and \( E' \subseteq E \), with \( V' \) being the set of teams that \emph{can} collaborate with at least one of its neighbors, and \( E' \) being the set of allowed collaborations. \( G' \) is then a filtered graph, representing the pairwise, locally beneficial collaborations between teams, ruled by Hamilton's rule. \( \overline{G}(\overline{V}, \overline{E}) \) will be the directed graph stemming from \( G' \), with \( \overline{E} \) containing only the edges representing the outgoing collaborations between each team \( k \) and its corresponding team \( l^* \) and the incoming collaborations between each team \( k \) and its corresponding team \( e^* \), and \( \overline{V} \) containing only the teams corresponding to the edges in \( \overline{E} \).
To simplify our analysis, we make the following assumption on the number of agents for the entire system.
\begin{assumption} \label{ass:num_agents}
    The number of agents available across all \( m \) teams in the system is fixed and equal to $N$.
\end{assumption}
In other words, we assume that during the collaborative process, no new agents are added to the system and no agents are lost (i.e., the total number of agents in the system is conserved).
To determine whether or not to allow this set of collaborations \( \overline{E} \) to be executed, the overall global performance should increase if executed. Hence, the system will be driven by the following inequality rule:
\begin{equation} \label{eq:global_rule}
    \mathcal{G}(n_1', \dots, n_m') \, > \mathcal{G}(n_1, \dots, n_m),
\end{equation}
where \( n_k' \) is the updated number of agents in team $k$ if the collaborations in the set \( E' \) were to be executed, under Assumption~\ref{ass:num_agents}. This rule allows for a global benefit instead of a greedy, pairwise benefit in collaborating; in a more linguistic expression, this rule dictates that ``if all of us benefit, each of us benefits."

\subsection{Iterative Collaboration Framework}

In this section, we present the framework for multi-team collaboration as an algorithm. This framework starts with an undirected graph \( G(V, E) \), filters it to a directed graph \( G'(V', E') \) using Hamilton’s Rule in~\eqref{eq:HR_new}, and then has each team select its best collaboration partner using the two bidding processes introduced in Section~\ref{sub:bid_proc}, and eventually get to \( \overline{G}(\overline{V}, \overline{E}) \). If executing these migrations increases the global mission evaluation function, they are performed, as shown in~\eqref{eq:global_rule}. This framework, shown in Algorithm~\ref{alg:collab_alg}, is repeated until no further improvement is possible. Figure~\ref{fig:graph_filtering} clarifies, through an example, how the graph filtering process works.

Let the set
\begin{equation} \nonumber
    \Omega \;=\; \Bigl\{(n_1,\dots,n_m) \in \mathbb{Z}_{>0}^m 
    \;\Big|\; \sum_{k=1}^m n_k = N \Bigr\}
\end{equation}
be the set of all possible allocations of the $N$ agents between the $m$ teams. The global mission evaluation function introduced in~\eqref{eq:global_rule} will be \( \mathcal{G} : \Omega \;\to\; \mathbb{R} \).

\begin{algorithm}[!b]
    \caption{Multi-Team Collaboration Process}
    \label{alg:collab_alg}
    \begin{algorithmic}[1]
        \State \textbf{Input:} Initial allocation $(n_1, \dots, n_m)$ satisfying $\sum_{k=1}^m n_k = N$, mission evaluation functions $F_k$, weights $w_k > 0$.
        \State Compute the initial global objective: \( \mathcal{G}^{(0)} \)
        % \[ \mathcal{G}^{(0)} = \sum_{k=1}^{m} w_k \, F_k(n_k). \]
        \State Set iteration counter $t \gets 0$.
        \Repeat
            \State Construct $G(V,E)$
            \State Using Eq.~\eqref{eq:HR_new}, get the set of allowed collaborations
            \State Construct $G'(V',E')$
            \For{each team $k \in V'$}
                \For{each $l \in \mathcal{H}_k$}
                    \State Compute the bid \( \Delta_{k \to l} \) in Eq.~\eqref{eq:bid_out}
                \EndFor
                \State Select the neighbor with the highest bid (Eq.~\eqref{eq:opt_bid_out})
            \EndFor
            \For{each team $k \in V'$}
                \For{each $e \in \mathcal{I}_k$}
                    \State Compute the bid \( \Delta_{e \to k} \) in Eq.~\eqref{eq:bid_in}
                \EndFor
                \State Select the neighbor with the highest bid (Eq.~\eqref{eq:opt_bid_in})
            \EndFor
            \State Construct $\overline{G}(\overline{V}, \overline{E})$
            \State Compute the new global objective: \( \mathcal{G}^{(t+1)} \)
                % \[ \mathcal{G}^{(t+1)} = \sum_{k=1}^{m} w_k \, F_k(n_k). \]
            \If{$\mathcal{G}^{(t+1)} > \mathcal{G}^{(t)}$}
                \For{each team $k \in \overline{V}$}
                    \State \emph{Collaborate}: transfer one agent from $k$ to $l^*$
                \EndFor
                \State Update the allocation $(n_1, \dots, n_m)$
            \EndIf
            \State Set $t \gets t+1$.
        \Until{$\mathcal{G}^{(t)} \ge \mathcal{G}^{(t + 1)}$}
        \State \textbf{Output:} The final allocation $(n_1^*, \dots, n_m^*)$.
    \end{algorithmic}
\end{algorithm}

\begin{lemma} \label{lem:existence_optimal}
    Under Assumption~\ref{ass:num_agents}, there exists an allocation
    \begin{equation} \nonumber
        (n_1^*, \dots, n_m^*) \;\in\; \Omega
    \end{equation}
    such that
    \begin{equation} \nonumber
        \begin{aligned}
            \mathcal{G}(n_1^*, \dots, n_m^*) 
        \; \; \ge \; \; 
        \mathcal{G}(n_1, \dots, n_m), \quad
        \forall \, (n_1, \dots, n_m) \in \Omega.
        \end{aligned}
    \end{equation}
    In other words, $\mathcal{G}$ attains a maximum value on $\Omega$, and $(n_1^*,\dots,n_m^*)$ is an optimal allocation.
\end{lemma}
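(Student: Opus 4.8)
The plan is to establish the existence of a maximizer by invoking the fact that $\mathcal{G}$ is a function defined on a finite set, so we need only argue that $\Omega$ is finite and nonempty and then appeal to the extreme value property for finite sets. First I would verify that $\Omega$ is nonempty: since $N \geq m$ is implicitly required for each team to hold at least one agent (as $\Omega \subseteq \mathbb{Z}_{>0}^m$), an allocation such as distributing agents so that each $n_k \geq 1$ and $\sum_k n_k = N$ exists whenever $N \geq m$, which is the regime of interest under Assumption~\ref{ass:num_agents}.

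Next I would show that $\Omega$ is finite. Each coordinate $n_k$ is a positive integer bounded above by $N - (m-1)$, since the other $m-1$ teams must each retain at least one agent. Hence $\Omega$ is contained in the finite box $\{1, \dots, N-m+1\}^m$, and the constraint $\sum_{k=1}^m n_k = N$ further restricts it to a finite set. Therefore $\Omega$ is a finite, nonempty subset of $\mathbb{Z}_{>0}^m$.

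The core step is then immediate: any real-valued function on a finite nonempty set attains its maximum. Concretely, the image $\mathcal{G}(\Omega) \subseteq \mathbb{R}$ is a finite nonempty set of real numbers, so it has a greatest element, and we may take $(n_1^*, \dots, n_m^*) \in \Omega$ to be any allocation mapping to that greatest value. This yields $\mathcal{G}(n_1^*, \dots, n_m^*) \geq \mathcal{G}(n_1, \dots, n_m)$ for all $(n_1, \dots, n_m) \in \Omega$, which is exactly the claim.

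I expect the only subtle point to be the nonemptiness and the implicit assumption $N \geq m$; the finiteness and the attainment of the maximum are essentially definitional once $\Omega$ is shown to be a finite subset of a bounded integer lattice. Notably, this argument uses neither the strict monotonicity nor the discrete concavity of Assumption~\ref{ass:mission_eval_func} — those properties are not needed for mere existence of a maximizer, only for characterizing the dynamics of Algorithm~\ref{alg:collab_alg} and its convergence to such an optimum, so I would flag that the lemma as stated is a purely combinatorial existence result.
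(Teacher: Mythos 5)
Your proposal is correct and follows essentially the same argument as the paper: $\Omega$ is a finite, nonempty set, so the image $\mathcal{G}(\Omega)$ is a finite subset of $\mathbb{R}$ and therefore has a greatest element attained by some allocation. Your added observations — the implicit requirement $N \ge m$ for nonemptiness and the fact that Assumption~\ref{ass:mission_eval_func} is not needed here — are accurate refinements that the paper's proof leaves implicit.
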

\begin{proof}
    Since $n_k > 0$ and $\sum_{k=1}^m n_k = N$ (Assumption~\ref{ass:num_agents}), the set $\Omega$ is finite and non-empty. Since $\Omega$ is finite, the set $\{f(n) \mid n \in \Omega\}$ is also finite, and any finite subset of $\mathbb{R}$ has a maximum. Hence, there exists at least one $n^\star \in \Omega$ such that 
    \begin{equation} \nonumber
        f(n^\star) = \max\{ f(n) \mid n \in \Omega \}.
    \end{equation}
    Thus, $\mathcal{G}$ has at least one global maximizer $(n_1^*,\dots,n_m^*) \in \Omega$, proving the lemma.
\end{proof}

\begin{theorem} \label{thm:alg_convergence}
    Under Assumptions~\ref{ass:homogeneous}-\ref{ass:num_agents}, the iterative process described in Algorithm~\ref{alg:collab_alg} terminates in a finite number of steps and converges to an allocation $(n_1^*, \dots, n_m^*) \in \Omega$ that maximizes $\mathcal{G}$.
\end{theorem}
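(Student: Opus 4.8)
The plan is to split the claim into its two halves — finite termination and global optimality of the terminal allocation — and to reduce the second half to the fact that $\mathcal{G}$ is a separable, discretely concave function on the finite simplex $\Omega$.

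First I would dispatch termination. By Lemma~\ref{lem:existence_optimal} the set $\Omega$ is finite and $\mathcal{G}$ is bounded above on it. The algorithm modifies the allocation only on the branch where the global check $\mathcal{G}^{(t+1)} > \mathcal{G}^{(t)}$ holds, so the subsequence of objective values produced at the update steps is strictly increasing. A strictly increasing sequence drawn from the finite value set $\{\mathcal{G}(n) : n \in \Omega\}$ can contain at most $|\Omega|$ terms; hence only finitely many updates can occur, and the algorithm halts once the proposed collaboration set $\overline{E}$ fails to raise $\mathcal{G}$.

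Next I would characterize the halting allocation $\hat n$. Observe that the bid $\Delta_{i\to j} = w_j B_j - w_i C_i$ in~\eqref{eq:bid_out} is exactly the change in $\mathcal{G}$ produced by moving one agent from $i$ to $j$, and that Hamilton's rule~\eqref{eq:HR_new} admits the edge $(i,j)$ into $E'$ precisely when this change is strictly positive. I would then argue that whenever some single transfer strictly increases $\mathcal{G}$, the filtered graph $G'$ is non-empty and an improving move is executable, so the algorithm cannot have halted; equivalently, at $\hat n$ no single-agent transfer between any pair of teams strictly increases $\mathcal{G}$. This is the step I expect to be the main obstacle: because $\overline{E}$ can bundle several simultaneous transfers (a team may give to $l^*$ while receiving from $e^*$), the net effect of the bundle need not equal the sum of the individual marginal gains, so I must justify carefully that a non-empty $G'$ always yields at least one strictly $\mathcal{G}$-improving executable move — the cleanest route being to show that the global check in~\eqref{eq:global_rule} can always be met by retaining a single positive-gain edge, so that termination genuinely implies $G'$ is empty.

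Finally I would prove that such a locally optimal $\hat n$ is globally optimal, which is the mathematical heart. Since each $w_k$ is positive and each $F_k$ is discretely concave (Assumption~\ref{ass:mission_eval_func}), the marginals $\mu_k(t) := w_k\bigl[F_k(t) - F_k(t-1)\bigr]$ of $h_k := w_k F_k$ are non-increasing in $t$, and local optimality reads $\mu_j(\hat n_j + 1) \le \mu_i(\hat n_i)$ for every ordered pair with $\hat n_i \ge 1$. For an arbitrary competitor $n^\star \in \Omega$ I would split the teams into gainers $Q = \{k : n_k^\star > \hat n_k\}$ and losers $P = \{k : n_k^\star < \hat n_k\}$, write each coordinate difference $h_k(n_k^\star) - h_k(\hat n_k)$ as a telescoping sum of marginals, and then bound the total gain above by $c\,D$ and the total loss below by $c\,D$, where $D = \sum_{j\in Q}(n_j^\star - \hat n_j) = \sum_{i\in P}(\hat n_i - n_i^\star)$ and $c = \min_{i\in P}\mu_i(\hat n_i)$; monotonicity of the marginals supplies the two bounds, while local optimality gives $\mu_j(\hat n_j+1) \le c$ for all $j \in Q$. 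Subtracting the two estimates yields $\mathcal{G}(n^\star) - \mathcal{G}(\hat n) \le 0$, so $\hat n$ attains the maximum guaranteed by Lemma~\ref{lem:existence_optimal}, completing the proof.
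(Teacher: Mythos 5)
Your proposal follows the same two-stage skeleton as the paper's proof (finite termination via a strictly increasing objective on the finite set $\Omega$, then ``local optimum implies global optimum'' via concavity), and the termination half is essentially identical. Where you genuinely depart from the paper is in the second stage: the paper merely \emph{asserts} that because $\mathcal{G}$ is separable and point-wise concave, ``any local improvement would yield a strictly higher value of $\mathcal{G}$,'' which is close to restating the claim rather than proving it. Your telescoping exchange argument --- splitting teams into gainers $Q$ and losers $P$, writing $h_k(n_k^\star)-h_k(\hat n_k)$ as sums of non-increasing marginals $\mu_k(t)$, and sandwiching both sides against $c\,D$ using the local-optimality inequalities $\mu_j(\hat n_j+1)\le\mu_i(\hat n_i)$ --- is the correct and complete version of that step, and it checks out (note that since $\Omega\subset\mathbb{Z}_{>0}^m$, every marginal you invoke has index at least $2$, so Assumption~\ref{ass:mission_eval_func} covers the monotonicity you need). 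You also deserve credit for flagging the one real gap that the paper passes over silently: the algorithm's stopping test compares $\mathcal{G}$ before and after executing the \emph{bundle} $\overline{E}$, not a single edge, so termination does not immediately imply that no single transfer is improving. This gap is closable without modifying the algorithm: since each team appears in $\overline{E}$ with at most one outgoing and one incoming edge, $\overline{E}$ decomposes into directed paths and cycles; Hamilton's rule along a cycle forces $\mu_{i_1}(n_{i_1}+1)>\mu_{i_1}(n_{i_1}+1)$, so cycles cannot occur, and each path $i_1\to\cdots\to i_p$ changes $\mathcal{G}$ by $\mu_{i_p}(n_{i_p}+1)-\mu_{i_1}(n_{i_1})>0$ by chaining the edge inequalities through concavity. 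Hence a non-empty $\overline{E}$ always passes the global check, and termination really does force $E'=\emptyset$. With that lemma added, your argument is a strictly more rigorous proof of the theorem than the one in the paper.
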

\begin{proof}
    Let $\{x^{(t)}\}_{t=0}^\infty$ denote the sequence of allocations generated by the algorithm, where 
    \begin{equation*}
    x^{(t)} = (n_1^{(t)}, n_2^{(t)}, \dots, n_m^{(t)}) \in \Omega. 
    \end{equation*}
    At each iteration $t$, if at least one collaboration (transfer) is executed, then by the design of Algorithm~\ref{alg:collab_alg}, we have 
    \begin{equation*}
    \mathcal{G}\bigl(x^{(t+1)}\bigr) > \mathcal{G}\bigl(x^{(t)}\bigr). 
    \end{equation*}
    Since $\Omega$ is finite and $\mathcal{G}$ is a real-valued function on $\Omega$, by Lemma~\ref{lem:existence_optimal} the function $\mathcal{G}$ attains a maximum value on $\Omega$, denoted by
    \begin{equation*}
    M = \max_{x \in \Omega} \mathcal{G}(x). 
    \end{equation*}
    Thus, the strictly increasing sequence $\{\mathcal{G}(x^{(t)})\}$ is bounded above by $M$. Because the sequence takes values in a finite set, it cannot increase indefinitely and must eventually stabilize; hence, no further collaborations would occur from that point on. More formally, there exists some $T \in \mathbb{N}$ such that 
    \begin{equation*}
    \mathcal{G}\bigl(x^{(T+1)}\bigr) \leq \mathcal{G}\bigl(x^{(T)}\bigr), 
    \end{equation*}
    where no further collaborations occur at $T+1$.
    Since a collaboration is executed only when it yields a strict increase in $\mathcal{G}$ (see Line 22 of Algorithm~\ref{alg:collab_alg}), we have that at iteration $T$ no allowed transfer can produce a strict increase. Thus, $x^{(T)}$ is a local optimum with respect to the allowed moves.
    
    Next, we show that under Assumption~\ref{ass:mission_eval_func},
    any local optimum with respect to these transfers is in fact a global optimum. Suppose for contradiction that $x^{(T)}$ is not a global optimum. Then there exists some allocation $x^* \in \Omega$ such that
    \begin{equation*}
    \mathcal{G}(x^*) > \mathcal{G}(x^{(T)}). 
    \end{equation*}
    Since the global objective $\mathcal{G}$ is point-wise concave in each coordinate (owing to the point-wise concavity of each $F_k$ by Assumption~\ref{ass:mission_eval_func} and the fact that $w_k > 0$), the function $\mathcal{G}$ has the property that any local improvement, i.e., any feasible agent transfer from one team to another that is allowed by Hamilton's Rule, would yield a strictly higher value of $\mathcal{G}$. Therefore, the existence of an allocation $x^*$ with $\mathcal{G}(x^*) > \mathcal{G}(x^{(T)})$ implies that there exists at least one allowed transfer from $x^{(T)}$ that would increase $\mathcal{G}$, contradicting the termination condition of the algorithm.
    Thus, the terminal allocation $x^{(T)}$ must coincide with a global maximizer of $\mathcal{G}$ on $\Omega$. Since the process only permits transfers that strictly increase $\mathcal{G}$ and $\Omega$ is finite, the iterative process converges to the optimal allocation in a finite number of steps, thus proving our theorem.
\end{proof}
Given the finite-time convergence of Algorithm~\ref{alg:collab_alg} to an optimal global mission evaluation, we now apply this framework to scenarios where area coverage may be used as a metric for local team mission evaluation.

\section{Team Mission: Area Coverage} \label{sec:coverage}

To illustrate the operation of the proposed framework, we consider the mission assigned to each robot team to be optimal area coverage, to which a typical approach is the Voronoi-based coverage control introduced in \cite{cortes2004coverage}.
Specifically, we consider the deployment of a team of $N \in \mathbb{Z}^+$ robots to cover a domain of interest $\mathcal{D} \subset \mathbb{R}^d$, which is a measurable set with nonempty interior, i.e., $\mathrm{Int}(\mathcal{D}) \neq \emptyset$, associated with a measurable density function $\phi(q): \mathcal{D} \rightarrow \mathbb{R}^+$ encoding the relative importance of different points in $\mathcal{D}$. By assuming the farther (based on the Euclidean norm) a robot is from a point $q \in \mathcal{D}$, the less effectively it covers $q$, the locational cost function can be formulated as \cite{cortes2004coverage} 
\begin{equation}
    L(p) = \sum_{i \in \mathcal{N}} \int_{\mathcal{V}_i}\! \|p_i - q\|^2 \phi(q) \,dq,
    \label{eq:lcost}
\end{equation}
where $p_i \in \mathbb{R}^d$ denotes the position of robot $i$, the vector $p = [p_1^\top, p_2^\top,\dots, p_N^\top]^\top$ contains the positions of all robots, $\mathcal{N} := \{1, 2, \dots. N\}$ is the set of indices of the robots, and
\begin{align} 
    \mathcal{V}_i &= \left\{ {q}\in \mathcal{D} 
    \,|\, \lVert{q}- p_i\rVert \leq \lVert {q}-p_j\rVert,\,\, \forall j \neq i \in \mathcal{N} \right\} \label{voronoi}
\end{align}
is the Voronoi cell of robot $i$. One can notice from~\eqref{voronoi} that $\cup_{i=1}^N \mathcal{V}_i = \mathcal{D}$ and the Lebesgue measure $\lambda(\cap_{i=1}^N \mathcal{V}_i) = 0$.

A typical approach to find an optimal coverage configuration of the robots, e.g., \cite{cortes2004coverage}, is that robot $i$ follows against the gradient of $L(p)$ with respect to $p_i$, $\forall i \in \mathcal{N}$, which is known as a continuous-time version of Lloyd's algorithm, asymptotically leading to a centroidal Voronoi tessellation (CVT), i.e., 
\begin{align}
p_i = c_i = \frac{\int_{\mathcal{V}_i}\! {q} \phi({q}) \,d{q}}{\int_{\mathcal{V}_i}\!\phi({q}) \,d{q}}, \quad \forall i \in \mathcal{N}, \notag
\end{align}
where $c_i \in \mathbb{R}^d$ is the centroid of the Voronoi cell of robot $i$. Similar to $p$, we denote a CVT as $c = [c_1^\top, c_2^\top,\dots, c_N^\top]^\top$.

In this paper, we consider the total number of robots $N$ as an additional argument of the locational cost function \eqref{eq:lcost}, i.e., 
\begin{equation}
    L(N, p) = \sum_{i \in \mathcal{N}} \int_{\mathcal{V}_i}\! \|p_i - q\|^2 \phi(q)\,dq.
    \label{eq:lcost_new}
\end{equation}
In addition, Fig.\ref{fig:Voronoi} illustrates an example of how the Voronoi tessellation changes after adding one robot to a team of robots.

To meet Assumption~\ref{ass:mission_eval_func}, we let the mission evaluation function $F(N) = -L(N,c)$ for some $c$. In Theorem~\ref{thm:mcc} and Corollary~\ref{cor:cvt_cost}, we investigate the strictly decreasing property of $L(N,c)$ with respect to $N$, i.e., the strictly increasing property of $F(N)$ presented in Assumption~\ref{ass:mission_eval_func}.

% \begin{figure}[t]
% \centering
% % \vspace{0.2cm}
% \subfigure[]{
% \begin{minipage}[b]{0.18\textwidth}
% \includegraphics[width=1\textwidth]{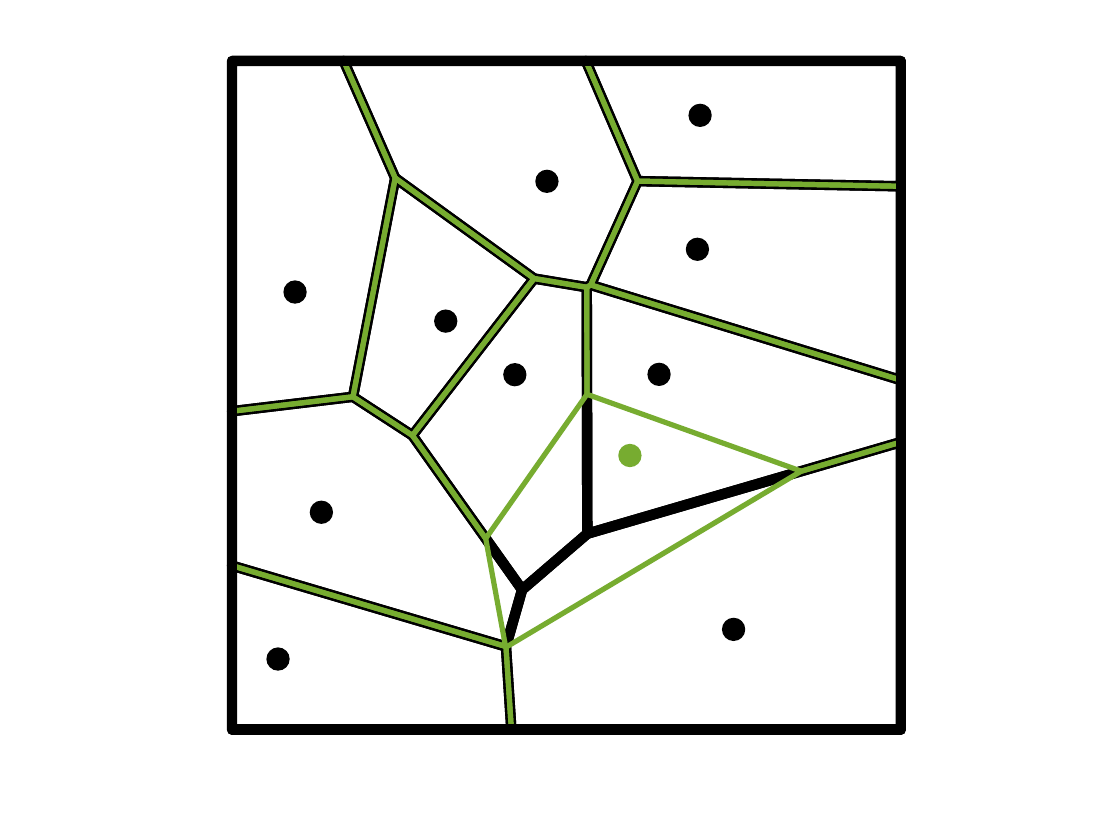}
% \end{minipage}
% \label{fig:Voronoi}
% }
% \hspace{0.5cm}
% \subfigure[]{
% \begin{minipage}[b]{0.24\textwidth}
% \includegraphics[width=1\textwidth]{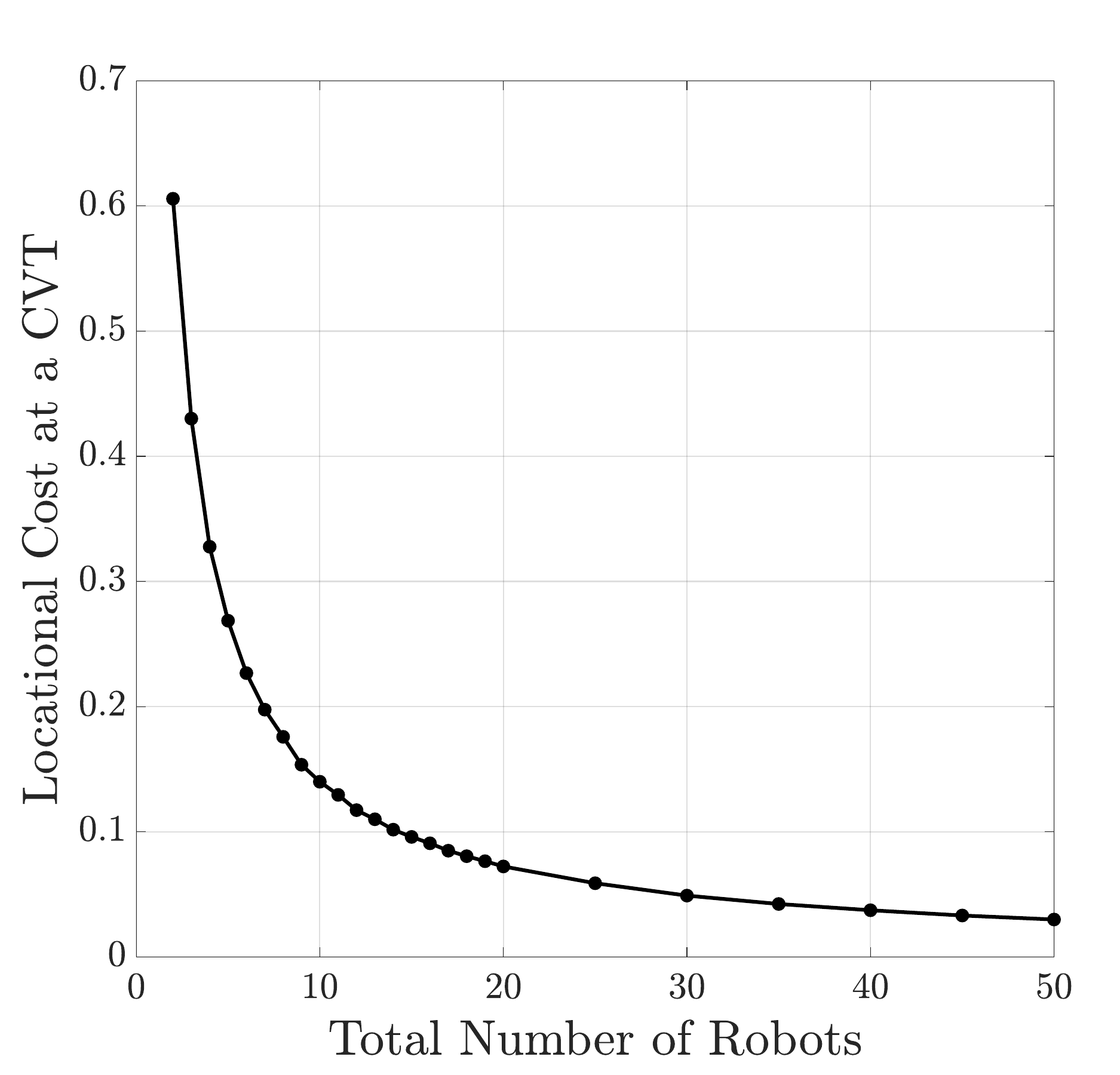}
% \end{minipage}
% \label{fig:cost_cvt}
% }
% \caption{(a) The boundaries of the Voronoi cells of $ N = 10 $ robots (black dots) in $\mathcal{D} \subset \mathbb{R}^2$ are shown by the black dashed lines. The newly added robot (green dot) represents $p_{N + 1}$, and the resulting Voronoi cells are represented by the green lines; (b) The locational cost \eqref{eq:lcost_new} at a CVT with $\phi(q) = \exp({-({q_x^2}/{0.8^2} + {q_y^2}/{0.8^2})})$ with respect to the total number of robots $N$.
% }
% \end{figure}

\begin{figure}[!t]
    % \vspace{-10pt}
    \centering
    \includegraphics[width=0.35\textwidth]{}
    \caption{The boundaries of the Voronoi cells of $ N = 10 $ robots (black dots) in $\mathcal{D} \subset \mathbb{R}^2$ are shown by the black dashed lines. The newly added robot (green dot) represents $p_{N + 1}$, and the resulting Voronoi cells are represented by the green lines.}
    \label{fig:Voronoi}
    \vspace{-12pt}
\end{figure}

\begin{theorem}
    Suppose $p_i \in \mathcal{D}$, $\forall i \in \mathcal{N}$. By adding one more robot $p_{N+1} \in {\mathrm{Int}(\mathcal{D})} \setminus \{p_i\}$, $\forall i \in \mathcal{N}$, the locational cost~\eqref{eq:lcost_new} decreases, i.e., 
    \begin{equation}
        L(N+1, \, p^\prime)<L(N, \, p), \quad \forall N \in \mathbb{Z}^+,\notag
    \end{equation}
    where $p^\prime = [p^\top,p_{N+1}^\top]^\top$.
    \label{thm:mcc}
\end{theorem}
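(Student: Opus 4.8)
The plan is to recast the locational cost in its equivalent ``nearest-robot'' form and then exploit the fact that the pointwise minimum can only shrink when an extra generator is introduced. First I would use the partition property of the Voronoi cells noted after~\eqref{voronoi} — namely $\cup_{i=1}^N \mathcal{V}_i = \mathcal{D}$ together with $\lambda(\cap_{i=1}^N \mathcal{V}_i) = 0$ — and the fact that on each cell $\mathcal{V}_i$ robot $i$ realizes the smallest distance, to rewrite~\eqref{eq:lcost_new} as a single integral:
\begin{equation} \nonumber
    L(N, p) = \int_{\mathcal{D}} \Bigl( \min_{i \in \mathcal{N}} \|p_i - q\|^2 \Bigr) \phi(q)\, dq.
\end{equation}
The measure-zero overlap of the cells guarantees that the sum of the cell integrals equals this integral of the pointwise minimum.

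Next I would apply the same identity to the augmented configuration $p' = [p^\top, p_{N+1}^\top]^\top$, giving $L(N+1, p') = \int_{\mathcal{D}} \bigl(\min_{i \in \{1,\dots,N+1\}} \|p_i - q\|^2\bigr)\phi(q)\,dq$, and observe that appending one more candidate to a minimum can only decrease it:
\begin{equation} \nonumber
    \min_{i \in \{1,\dots,N+1\}} \|p_i - q\|^2 = \min\Bigl\{ \min_{i \in \mathcal{N}} \|p_i - q\|^2, \ \|p_{N+1} - q\|^2 \Bigr\} \le \min_{i \in \mathcal{N}} \|p_i - q\|^2
\end{equation}
for every $q \in \mathcal{D}$. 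Integrating this pointwise inequality against the nonnegative weight $\phi$ immediately yields the weak bound $L(N+1, p') \le L(N, p)$, which constitutes the routine part of the argument.

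The hard part will be upgrading this to a \emph{strict} inequality, and this is where the hypotheses $p_{N+1} \neq p_i$ and $\phi > 0$ must be used carefully. My plan is to set $\epsilon = \tfrac{1}{2}\min_{i \in \mathcal{N}} \|p_{N+1} - p_i\|$, which is strictly positive precisely because $p_{N+1} \notin \{p_i\}$. A triangle-inequality estimate then shows that for every $q \in B(p_{N+1}, \epsilon)$ and every $i \in \mathcal{N}$ one has $\|q - p_i\| \ge \|p_{N+1} - p_i\| - \|q - p_{N+1}\| > \epsilon > \|q - p_{N+1}\|$, so $p_{N+1}$ is \emph{strictly} the closest robot throughout this ball and the pointwise inequality above is strict there (except at the single point $q = p_{N+1}$). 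Since $\mathcal{D}$ is a full-dimensional convex compact set containing $p_{N+1}$, the region $B(p_{N+1}, \epsilon) \cap \mathcal{D}$ has positive Lebesgue measure, and because $\phi$ is positive on $\mathcal{D}$, integrating the strictly smaller integrand over a set of positive $\phi$-measure forces $L(N+1, p') < L(N, p)$. The main obstacle is thus not algebraic but measure-theoretic: I must ensure that the set on which the new robot is strictly nearest carries positive weight, which rests jointly on $p_{N+1}$ being distinct from the existing generators, on $\mathcal{D}$ having nonempty interior, and on the strict positivity of $\phi$.
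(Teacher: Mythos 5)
Your proposal is correct and follows essentially the same route as the paper: rewriting $L$ as the integral of the pointwise nearest-robot distance, obtaining the weak inequality from the monotonicity of the minimum, and then establishing strictness on a positive-measure neighborhood of $p_{N+1}$. The only difference is that you explicitly construct the set where $p_{N+1}$ is strictly nearest (the ball of radius $\tfrac{1}{2}\min_{i}\|p_{N+1}-p_i\|$ via the triangle inequality), whereas the paper simply asserts the existence of such a set $\mathcal{A}$ of nonzero Lebesgue measure, so your version is, if anything, slightly more complete.
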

\begin{proof}
    Based on \cite{cortes2004coverage}, the locational cost function~\eqref{eq:lcost_new} can be rewritten as
    \begin{equation}
        L(N,p) = \int_{\mathcal{D}} \min_{i \in \mathcal{N}} \|p_i - q\|^2 \phi(q)\,dq. \notag
    \end{equation}
    In addition, we denote
    \begin{equation*}
    H(p^\prime,q) = \min_{i \in \mathcal{N}} \|p_i - q\|^2
    - 
    \min_{i \in \mathcal{N}^\prime} \|p_i - q\|^2,
    \end{equation*}
    where $\mathcal{N}^\prime = \mathcal{N} \cup \{N+1\}$. Apparently, $H(p^\prime,q) \geq 0$, $\forall q \in \mathcal{D}$.

Moreover, we let 
\begin{equation*}
\bar{\delta} = \min_{i\in\mathcal N}\|p_{N+1}-p_i\|>0.
\end{equation*}
Since $p_{N+1}\in \mathrm{Int}(\mathcal D)$, then $\exists \,{\varepsilon}>0$ such that the open Euclidean ball
\begin{equation*}
\mathcal{B}(p_{N+1},{\varepsilon}) = \{q \in \mathbb{R}^d \mid \|q - p_{N+1}\| < \varepsilon\} 
\subset
\mathcal{D}.
\end{equation*}
Furthermore, we define
\begin{equation*}
\bar{\varepsilon}=\min\{\varepsilon,\bar{\delta}/2\}
\quad\text{and}\quad
\bar{\mathcal{B}}=\mathcal{B}(p_{N+1},\bar{\varepsilon}).
\end{equation*}
Then, the Lebesgue measure $\lambda(\bar{\mathcal{B}})>0$, and thus we have
\begin{align*}
\|q-p_i\|
\geq 
&\,\|p_i-p_{N+1}\|-\|q-p_{N+1}\| \\
>
&\, 
\bar{\delta}-\bar{\varepsilon} \\
\geq &\,
\bar{\delta}/2
>
\|q-p_{N+1}\|,
\quad \forall q\in \bar{\mathcal{B}}, \, \forall i\in\mathcal N.
\end{align*}
Hence, we get
\begin{equation*}
\min_{i\in\mathcal N'}\|p_i-q\|^2=\|q-p_{N+1}\|^2<\min_{i\in\mathcal N}\|p_i-q\|^2,
\quad \forall q\in \bar{\mathcal{B}},
\end{equation*}
which means that $H(p^\prime,q) > 0$, $\forall q \in \bar{\mathcal{B}}$.
Together with the assumption $\phi(q) >0$ and the fact $H(p^\prime,q) \geq 0$, $\forall q \in \mathcal{D}$, we can have
\begin{align*}
&L(N,p)-L(N+1,p') \\
= &\int_\mathcal{D}\!
H(p^\prime,q)
\phi(q)
\,dq \\
= &\int_{\bar{\mathcal{B}}}\!
H(p^\prime,q)
\phi(q)
\,dq 
+
\int_{\mathcal{D} \setminus \bar{\mathcal{B}}}\!
H(p^\prime,q)
\phi(q)
\,dq \\
\geq&
\int_{\bar{\mathcal{B}}}\!
H(p^\prime,q)
\phi(q)
\,dq
> 
0,
\end{align*}
which proves $L(N+1,p')<L(N,p)$.
\end{proof}

\begin{figure}[!b]
    \vspace{-12pt}
    \centering
    \includegraphics[width=0.32\textwidth]{}
    \caption{The locational cost \eqref{eq:lcost_new} at a CVT with $\phi(q) = e^{-(\frac{q_x^2}{0.8^2} + \frac{q_y^2}{0.8^2})}$ with respect to the total number of robots $N$.}
    % \vspace{-10pt}
\label{fig:cost_cvt}
\end{figure}

\begin{corollary}
Suppose $N$ robots have converged to a CVT, i.e., $p_i = c_i$, $\forall i \in \mathcal{N}$. After adding one more robot $p_{N+1} \in {\mathrm{Int}(\mathcal{D})} \setminus \{p_i\}$, $\forall i \in \mathcal{N}$, a new CVT $c^\prime$ with $N+1$ robots resulting from the Lloyd's algorithm has a lower locational cost value than the old CVT $c$ with $N$ robots does, i.e.,
    \begin{equation*}
        L(N+1,c^\prime)<L(N,c), \quad \forall N \in \mathbb{Z}^+.
    \end{equation*}
\label{cor:cvt_cost}
\end{corollary}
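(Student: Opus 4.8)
The plan is to chain together two facts: the strict cost decrease from inserting a robot, already furnished by Theorem~\ref{thm:mcc}, and the non-increasing behavior of the locational cost under Lloyd's algorithm established in \cite{cortes2004coverage}. The intermediate configuration obtained by adding the new robot to the old CVT need not itself be centroidal, so I would bridge the gap between it and the new CVT $c^\prime$ using the monotonicity of the gradient flow.

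First I would treat the old CVT $c$ with $N$ robots merely as an admissible configuration of robot positions and append the extra robot $p_{N+1} \in \mathcal{D} \setminus \{p_i\}$ to form the configuration $\bar{p} = [c^\top, p_{N+1}^\top]^\top$ of $N+1$ robots. Applying Theorem~\ref{thm:mcc} with $p = c$ yields the strict inequality
\begin{equation} \nonumber
    L(N+1, \bar{p}) < L(N, c).
\end{equation}
This step already captures the strictness of the final claim, since $p_{N+1}$ is placed away from the existing robot positions and the density $\phi$ is positive on a set of nonzero Lebesgue measure where the new robot is strictly closer.

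Next I would invoke the defining property of Lloyd's algorithm: the continuous-time version used here is the gradient flow of $L(N+1,\cdot)$, so the cost is non-increasing along its trajectory and converges to the centroidal tessellation $c^\prime$. Starting the flow from $\bar{p}$ therefore gives
\begin{equation} \nonumber
    L(N+1, c^\prime) \leq L(N+1, \bar{p}).
\end{equation}
Combining the two displayed inequalities produces $L(N+1, c^\prime) < L(N, c)$, as desired. The main obstacle is not the algebra but the justification of the monotone descent: I would need to rely on \cite{cortes2004coverage} for the fact that the Lloyd iteration weakly lowers $L$ (equivalently, that relocating each robot to the centroid of its Voronoi cell does not increase the cost), and to confirm that the convexity and compactness of $\mathcal{D}$ together with the positivity of $\phi$ guarantee convergence to a well-defined CVT $c^\prime$. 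Once that descent property is granted, the strictness inherited from the insertion step carries through regardless of whether the gradient flow itself decreases the cost strictly.
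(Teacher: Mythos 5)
Your proof is correct and takes essentially the same route as the paper's: apply Theorem~\ref{thm:mcc} to the intermediate configuration $[c^\top, p_{N+1}^\top]^\top$ to get the strict decrease, then use the non-increasing property of $L(N+1,\cdot)$ under Lloyd's algorithm to pass to the new CVT $c^\prime$ and chain the inequalities. The only difference is that you are more explicit about what must be borrowed from \cite{cortes2004coverage} (monotone descent and convergence to a well-defined CVT), which the paper simply cites.
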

\vspace{-20pt}
\begin{proof}
Suppose the $N\in \mathbb{Z}^+$ robots have converged to a CVT $c$, by adding $p_{N+1} \in {\mathrm{Int}(\mathcal{D})} \setminus \{p_i\}$, $\forall i \in \mathcal{N}$, as per Theorem~\ref{thm:mcc}, we have $L(N,c) > L(N+1, [c^T, p^T_{N+1}]^T)$. Then, the Lloyd's algorithm, e.g., \cite{cortes2004coverage}, results in
    \begin{equation*}
    L(N+1, [c^T, p^T_{N+1}]^T) \geq L(N+1, c^\prime),
    \end{equation*}
    where $c^\prime$ is a new CVT with $N+1$ robots. Hence, $L(N+1,c^\prime)<L(N,c), \,\, \forall N \in \mathbb{Z}^+$,
    proving Corollary~\ref{cor:cvt_cost}.
\end{proof}

\begin{figure*}[!b]
    \vspace{-12pt}
    \subfigure[]{\includegraphics[width=0.35\textwidth]{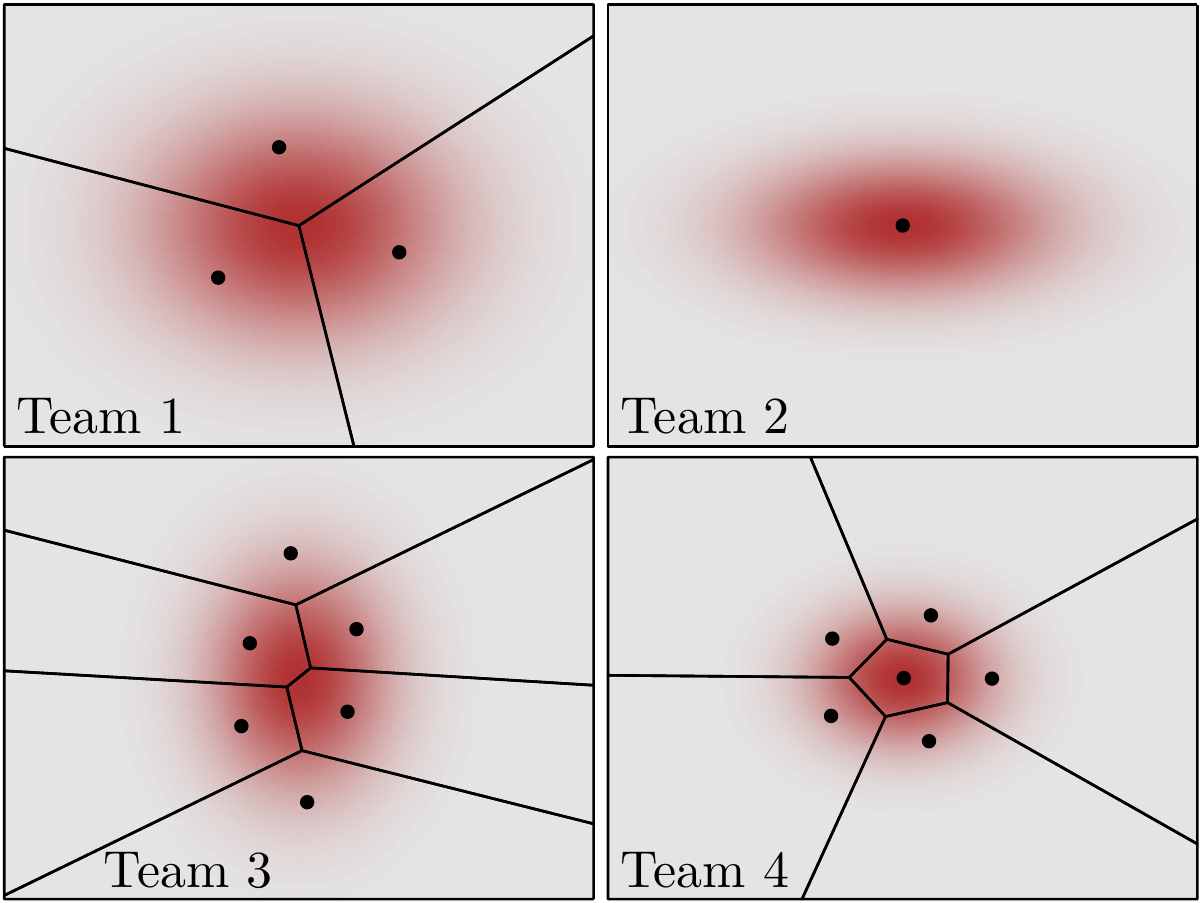}}
    \hfill
    \subfigure[]{\includegraphics[width=0.35\textwidth]{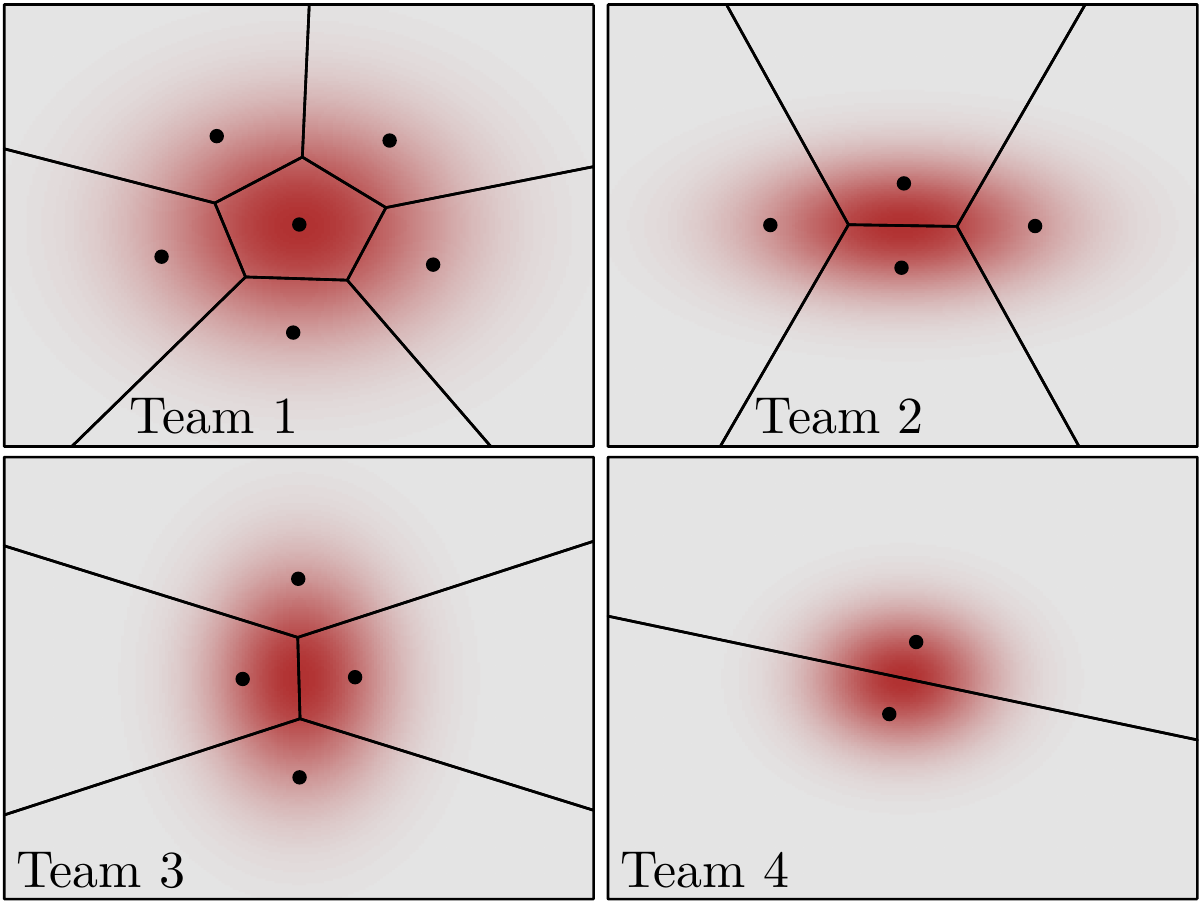}}
    \hfill
    \subfigure[]{\includegraphics[width=0.28\textwidth]{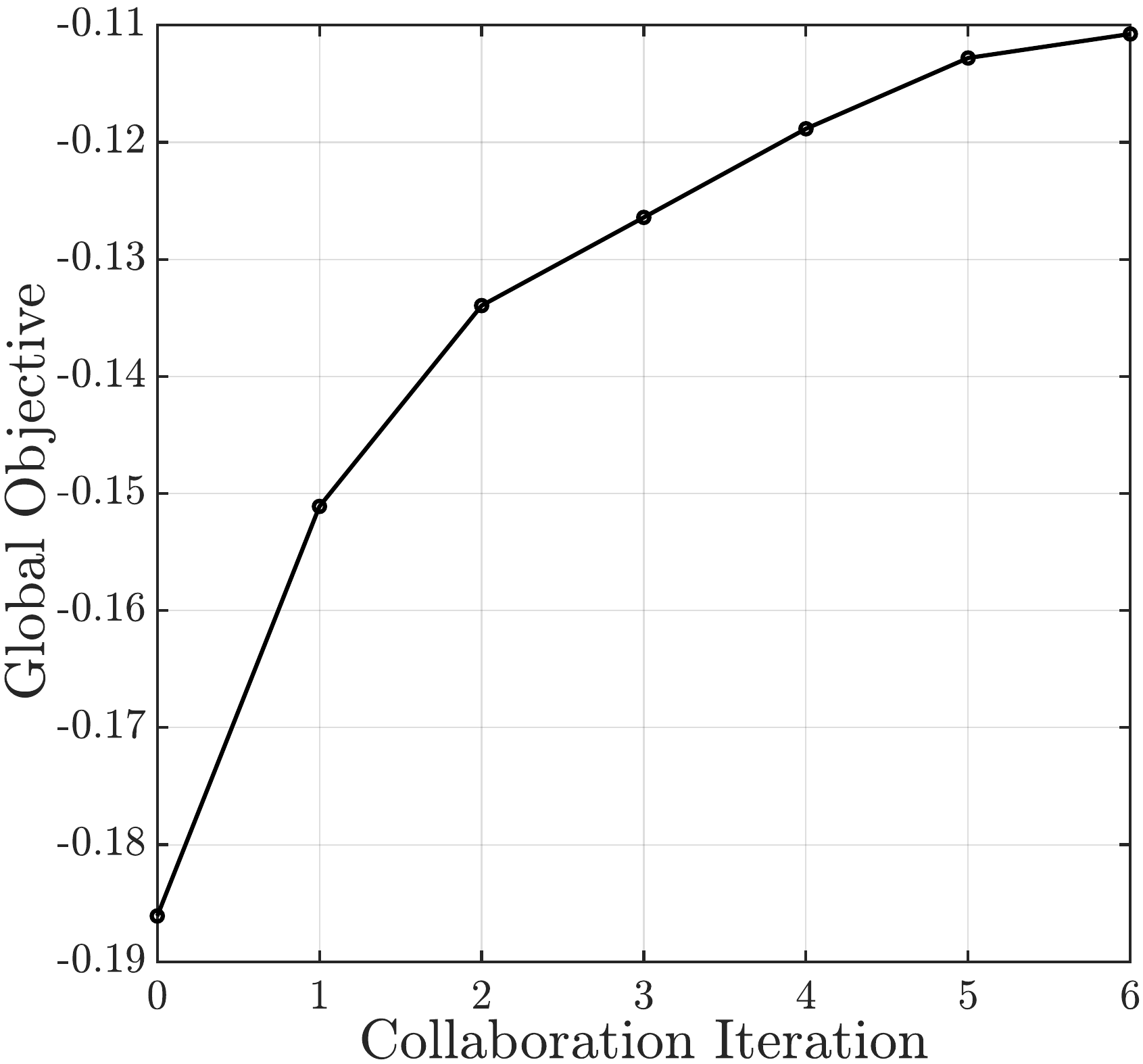}}
    \caption{Simulation of Algorithm~\ref{alg:collab_alg} with \( m = 4\) and \( N = 16 \) while having all teams have equal weights, but different mission evaluation functions by changing the density function \( \phi(q) \) in the locational cost in Equation~\ref{eq:lcost}. (a) The initial random allocation of robots to teams. (b) The final robot allocation. (c) The value of \( \mathcal{G} \) at each collaboration iteration until collaboration is no longer possible (hence, getting the allocation in (b)).}
    % \vspace{-15pt}
    \label{fig:sim_same_w}
\end{figure*}

The property of diminishing returns of the mission evaluation function $F(N) = -L(N,c)$ is not trivial to prove in the context of Voronoi-based coverage control. Nevertheless, we empirically find that $L(N,c)$ asymptotically decreases with respect to $N$ with a diminishing rate, as illustrated in Fig.\ref{fig:cost_cvt}, which corresponds to the property of diminishing returns of $F(N)$ presented in Assumption~\ref{ass:mission_eval_func}. We leave its theoretical proof as a future work. 

% The (discretized version of) convexity of $L(N,c)$ with respect to $N$, i.e., the concavity of $F(N)$ in Assumption~\ref{ass:2_2} is not trivial to prove, and we leave it as a future work. Nevertheless, we empirically find that $L(N,c)$ asymptotically decreases with respect to $N$ with diminishing returns, i.e., $F(N) = -L(N,c)$ satisfies the (discretized version of) concavity in Assumption~\ref{ass:mission_eval_func}, as shown in Fig.\ref{fig:cost_cvt}.

\section{Simulations} \label{sec:simulations}

In this section, we show two different simulations of Algorithm~\ref{alg:collab_alg} applied to coverage control, where \( m = 4 \) and \( N = 16 \). Figures~\ref{fig:sim_same_w} and~\ref{fig:sim_same_phi} show the evolution of the four teams, each having been assigned a mission importance and mission evaluation function. As introduced in Section~\ref{sec:coverage}, coverage control fits our assumptions for team missions, as the negative of the locational cost satisfies the conditions in Assumption~\ref{ass:mission_eval_func}. Figure~\ref{fig:sim_same_w} emphasizes how having different team weights but identical mission evaluation functions across teams, can affect team-level collaboration, and final robot allocation, while Figure~\ref{fig:sim_same_phi} emphasizes how having different mission evaluation functions but same weights, will drive the collaborations and allocations.

In the simulation shown in Figure~\ref{fig:sim_same_w}, teams will find it beneficial to collaborate with each other mainly based on their mission evaluation function, hence, the difference in the density functions of their respective area, since the weights of all teams are equal and of value \( w = 1 \). The density functions of the respective areas assigned to Teams 1 to 4 are as follows: \( \phi_1(q) = e^{-(\frac{q_x^2}{0.5^2} + \frac{q_y^2}{0.5^2})} \), \( \phi_2(q) = e^{-(\frac{q_x^2}{0.5^2} + \frac{q_y^2}{0.3^2})} \), \( \phi_3(q) = e^{-(\frac{q_x^2}{0.3^2} + \frac{q_y^2}{0.5^2})} \), and \( \phi_4(q) = e^{-(\frac{q_x^2}{0.3^2} + \frac{q_y^2}{0.3^2})} \). Team 1 is assigned an area with a larger density function than the other teams, thus ending up with the most robots and vice-versa for Team 4. However, Teams 2 and 3 get allocated 4 robots each given that their densities have identical area spanning but different spanning axes. However, in the simulation shown in Figure~\ref{fig:sim_same_phi}, given that Teams 1 to 4 have gradually increasing weights, as follows: \( w_1 = 1 \), \( w_2 = 2 \), \( w_3 = 6 \), and \( w_4 = 20 \), but identical density functions of value \( \phi(q) = e^{-(\frac{q_x^2}{0.5^2} + \frac{q_y^2}{0.5^2})} \), at the end of collaboration, each end up with a respective gradually increasing number of robots. In both simulations, as seen in Figures~\ref{fig:sim_same_w}-(b, c) and~\ref{fig:sim_same_phi}-(b, c), the system stabilizes once no globally beneficial collaborations can be executed between teams.

To compute the cost \( C_i \) (removing a robot from team \( i \)) and benefit \( B_j \) (adding a robot to team \( j \)), we run Lloyd's algorithm until a CVT is reached. This is done due to the unfairness of computing the benefits and costs after the instantaneous removal or addition of a robot from or to a team, where the benefit will always be higher than the cost by several magnitudes. In other words, the workload assigned to a robot about to be removed is significantly higher than the workload that will be assigned to a robot about to be added because of the differences in the area each is responsible for. Thus, when removing a robot, a CVT would have been reached right before the point of transfer, inducing a significant performance gap. However, when adding a robot, it is being added to a team where the CVT is reached without it, creating a suboptimal positioning with respect to the density function \( \phi(q) \) and remaining agents. Hence, running Lloyd's algorithm to reach CVT would give a fair comparison between benefit and cost.

In Figures~\ref{fig:sim_same_w}-(c) and~\ref{fig:sim_same_phi}-(c), we plot the value of the global mission evaluation function, introduced in~\eqref{eq:global_func}. It is clear that not only is \( \mathcal{G} \) increasing, hence the system performance is improving, but its curve appears to exhibit diminishing returns in the improvements. Those experiments serve as a proof that coverage control is a suitable application to our framework, having the negative of the locational cost of area coverage be the mission evaluation function of the teams in the system. Moreover, the simulations exhibit behavior that fits the expected results according to the Assumptions provided for the proper execution of Algorithm~\ref{alg:collab_alg}.

 \begin{figure*}[!t]
    \subfigure[]{\includegraphics[width=0.35\textwidth]{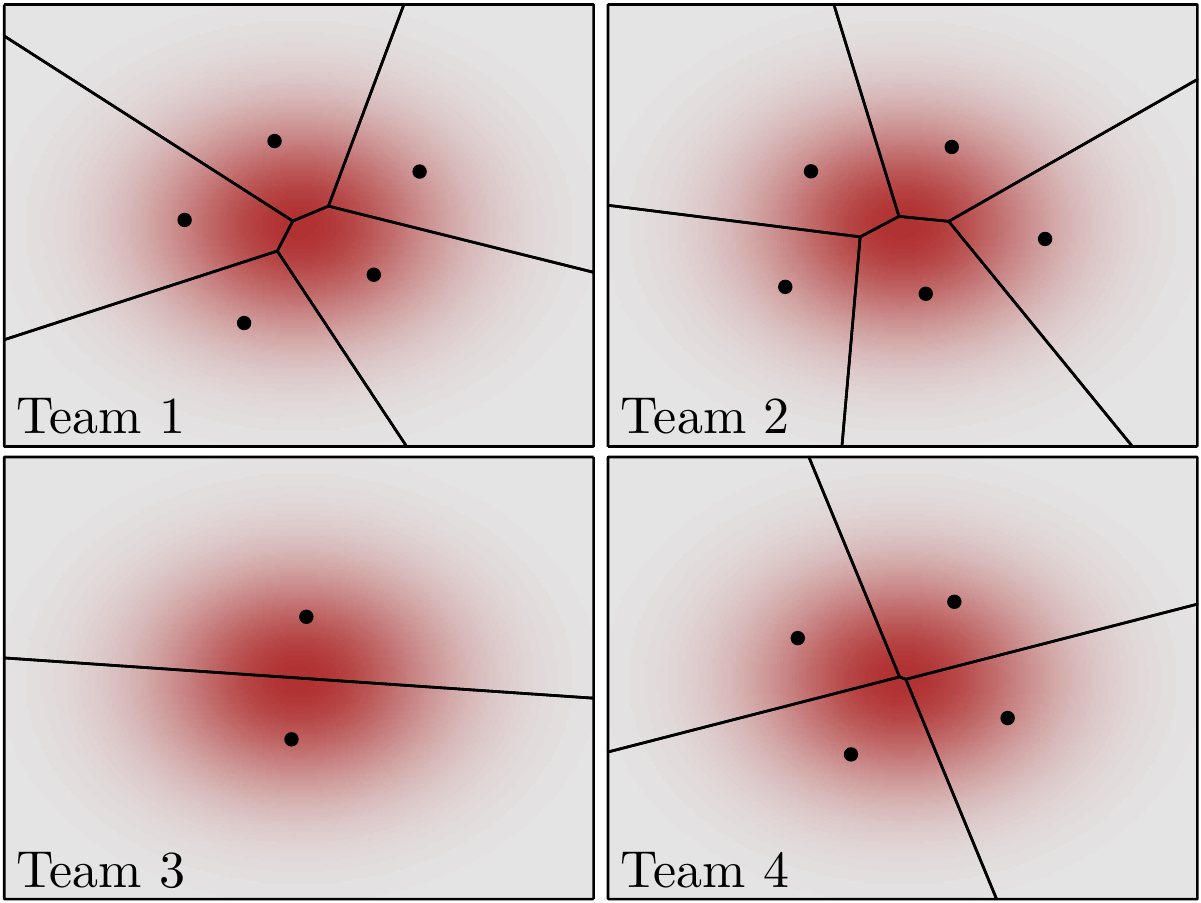}}
    \hfill
    \subfigure[]{\includegraphics[width=0.35\textwidth]{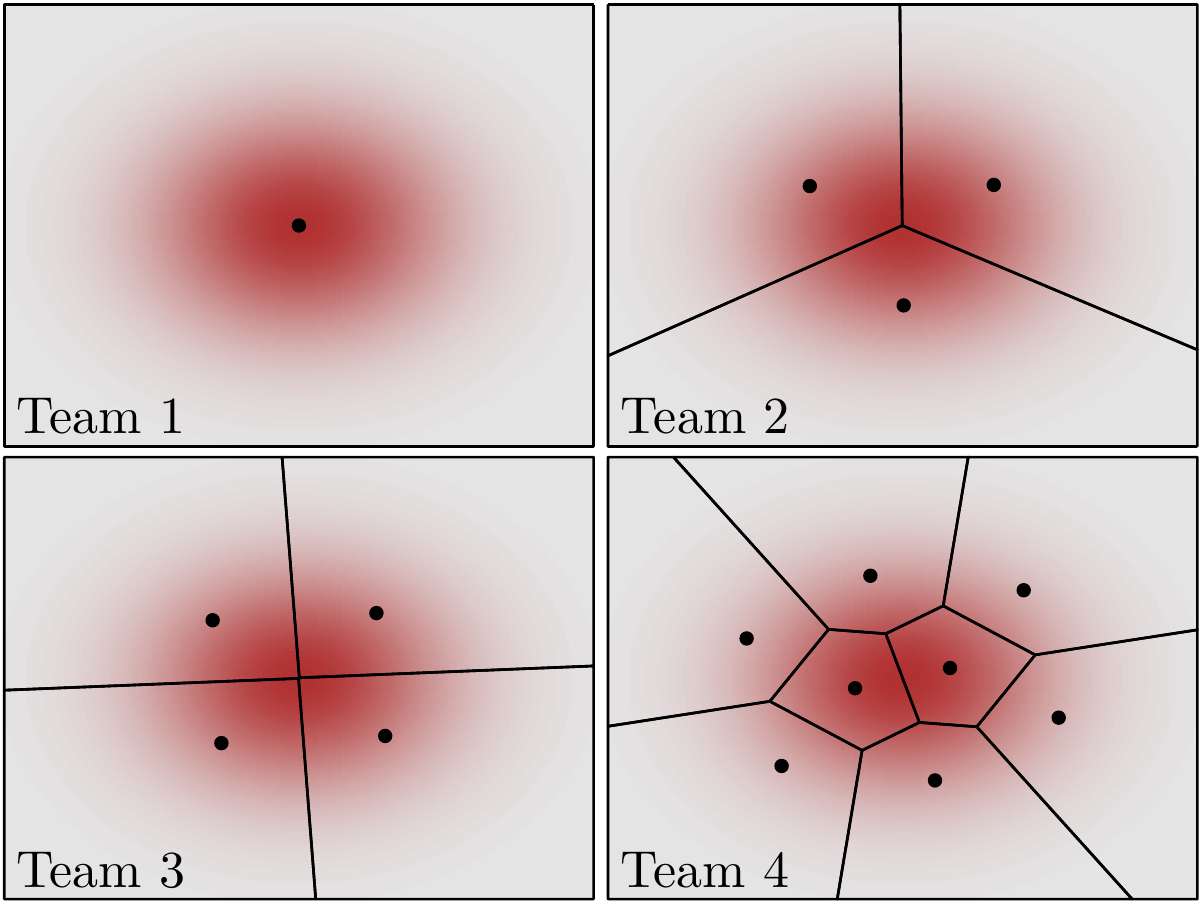}}
    \hfill
    \subfigure[]{\includegraphics[width=0.28\textwidth]{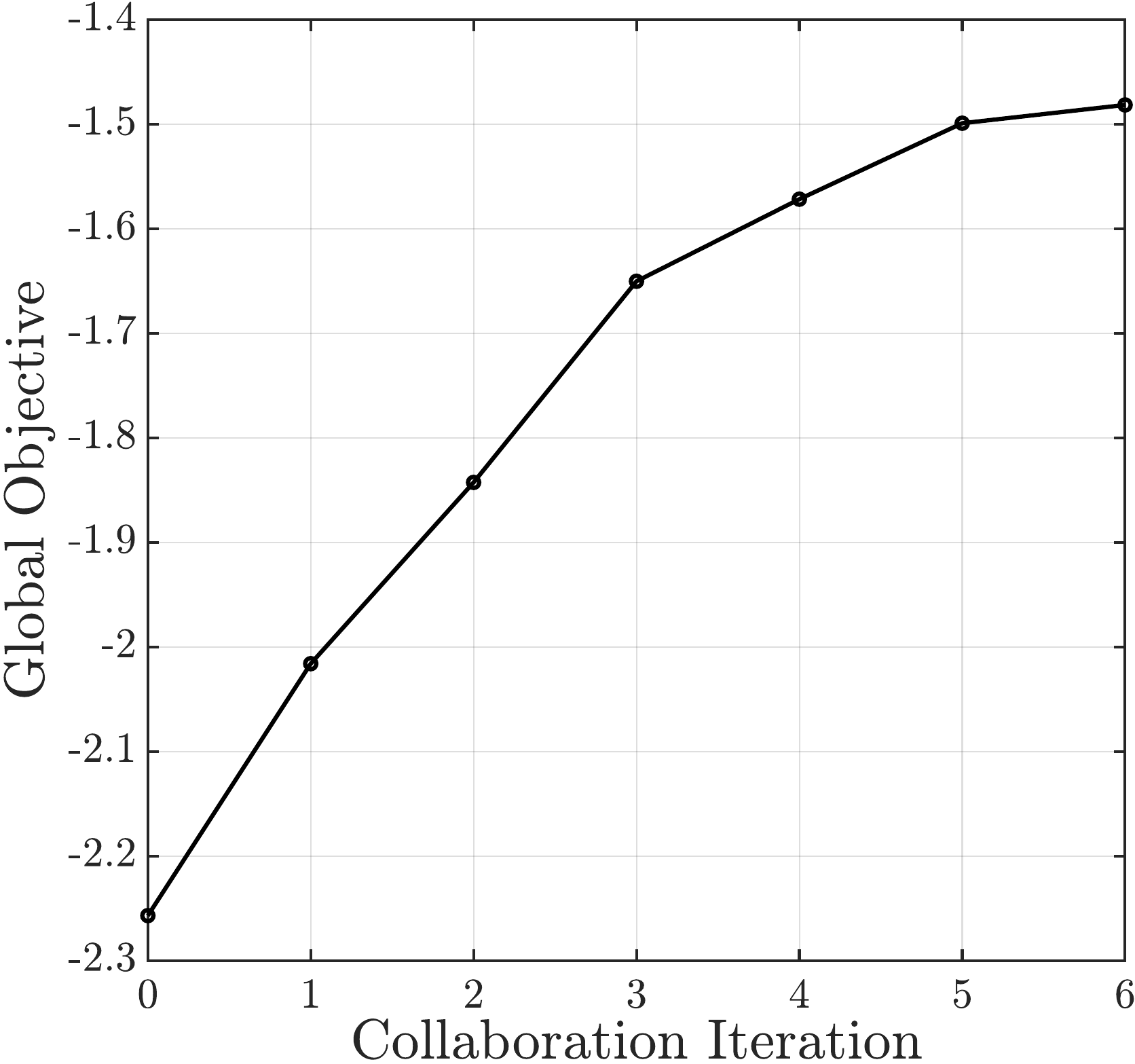}}
    \caption{Simulation of Algorithm~\ref{alg:collab_alg} with \( m = 4\) and \( N = 16 \) while having all teams have identical mission evaluation functions with the same density function \( \phi(q) \) in the locational cost in Equation~\ref{eq:lcost} for all teams, but different team weights. (a) The initial random allocation of robots to teams. (b) The final robot allocation. (c) The value of \( \mathcal{G} \) at each collaboration iteration until collaboration is no longer possible (hence, getting the allocation in (b)).}
    \vspace{-12pt}
    \label{fig:sim_same_phi}
\end{figure*}

\section{Conclusion} \label{sec:conclusion}

Optimizing collaboration within multi-robot systems is critical for enhancing operational efficiency, scalability, and robustness in complex environments. In this paper, we presented an ecology-inspired framework for facilitating collaboration in multi-team environments that achieves optimal performance for the entire system through altruistic behavior, where robots are considered as shareable system resources. However, many open directions remain for this work, including incorporating time-varying importance and mission evaluation functions, heterogeneous agents, and other mission applications such as search and rescue and wildfire management.

\bibliographystyle{IEEEtran}
\bibliography{mybib}

\end{document}